\newif\ifdraft
\newcommand{\K}{\mathcal{K}} \renewcommand{\L}{\mathcal{L}}
\renewcommand{\S}{\mathcal{S}}
\newcommand{\mf}{\mathfrak}
\newcommand{\tup}[1]{\langle #1\rangle}            
\newcommand{\myi}{\emph{(i)}\xspace}
\newcommand{\myii}{\emph{(ii)}\xspace}
\newcommand{\myiii}{\emph{(iii)}\xspace}
\newcommand{\sys}{\ensuremath{\S}\xspace}
\newcommand{\mulpers}{\ensuremath{\muL_{P}}\xspace}
\newcommand{\muL}{\mu\L} 
\renewcommand{\mf}[1]{\Upsilon}
\title{Verification of Artifact-Centric Systems: \\Decidability and Modeling Issues}
\author{Dmitry Solomakhin\inst{1} \and Marco Montali\inst{1} \and Sergio Tessaris\inst{1} \and\\
 Riccardo De Masellis\inst{2}
}
\authorrunning{Dmitry Solomakhin et al.}
\institute{
 Free University of Bozen-Bolzano,
 Piazza Domenicani 3, 39100 Bolzano, Italy\\
 \email{(solomakhin|montali|tessaris)@inf.unibz.it}
 \and 
 Sapienza Universit\`a\  di Roma,
 Via Ariosto, 25, 00185 Rome, Italy\\
 \email{demasellis@dis.uniroma1.it}
}
\begin{document}

\maketitle
\sloppy

\setcounter{footnote}{0}

\begin{abstract}
Artifact-centric business processes have recently emerged as an approach in which processes are centred around the evolution of business entities, called \emph{artifacts}, giving equal importance to control-flow and data.
%
The recent Guard-State-Milestone (GSM) approach provides means for specifying business artifacts lifecycles in a declarative manner, using constructs that match how executive-level stakeholders think about their business.
However, it turns out that formal verification of GSM is undecidable even for very simple propositional temporal properties. We attack this challenging problem by translating GSM into a well-studied formal framework. We exploit this translation to isolate an interesting class of ``state-bounded'' GSM models for which verification of sophisticated temporal properties is decidable. We then introduce some guidelines to turn an arbitrary GSM model into a state-bounded, verifiable model.
\end{abstract}
\vspace*{-20pt}
\centerline{\small \textbf{Keywords:} artifact-centric systems, guard-stage-milestone, formal verification}

\section{Introduction}
\label{intro}

In the last decade, a plethora of graphical notations (such as BPMN and EPCs) have been proposed to capture business processes. Independently from the specific notation at hand, formal verification has been generally considered as a fundamental tool in the process design phase, supporting the modeler in building correct and trustworthy process models \cite{Mor2008}. Intuitively, formal verification amounts to check whether possible executions of the business process model satisfy some desired properties, like generic correctness criteria (such as deadlock freedom or executability of activities) or
 domain-dependent constraints.  To enable formal verification and other forms of reasoning support, the business process language gets translated into a corresponding formal representation, which typically relies on variants of Petri nets \cite{VDAS11}, transition systems \cite{ArP09}, or process algebras \cite{PW05}. Properties are then formalized using temporal logics,
using model checking techniques to actually carry out verification tasks  \cite{Clarke1999:ModelChecking}.

A common drawback of classical process modeling approaches is 
being \emph{activity-centric}: they mainly focus on the control-flow perspective, lacking the connection between the process and the data manipulated during its executions. This reflects also in the corresponding verification techniques, which often abstract away from the data component. 
%
%
%
%
This ``data and process engineering divide'' affects many contemporary process-aware information systems, incrementing the amount of redundancies and potential errors in the development phase \cite{Dum2011}. 
To tackle this problem, the artifact-centric paradigm has recently emerged as an approach in which processes are guided by the evolution of business data objects, called \emph{artifacts} \cite{Nigam03:artifacts,CH09}. A key aspect of artifacts is coupling the 
representation of data of interest, called \emph{information model}, with \emph{lifecycle constraints}, which specify the acceptable evolutions of the data maintained by the information model.
%
On the one hand, new modeling notations are being proposed to tackle artifact-centric processes. A notable example is the Guard-State-Milestone (GSM) graphical notation \cite{Damaggio:2011:EIF:2040283.2040315}, which corresponds to way executive-level stakeholders conceptualize their processes \cite{Bhatt-2007:artifacts-customer-engagements}.
On the other hand, formal foundations of the artifact-centric paradigm are being investigated in order to capture the relationship between processes and data and support formal verification  \cite{Deutsch:2009:AVD:1514894.1514924,BeLP12,DBLP:journals/corr/abs-1203-0024}. 
Two important issues arise in this setting. 
First, verification formalisms must go beyond propositional temporal logics, and incorporate first-order formulae to express constraints about the evolution of data and to query the information model of artifacts. 
Second, formal verification becomes much more difficult than for classical activity-centric approaches, even undecidable in the general case.

In this work, we tackle the problem of \emph{automated verification of GSM models}. 
First of all, we show that verifying GSM models is indeed a very challenging issue, being undecidable in general even for simple propositional reachability properties. We then provide a sound and complete encoding of GSM into Data-Centric Dynamic Systems (DCDSs), a recently developed formal framework for data- and artifact-centric processes \cite{DBLP:journals/corr/abs-1203-0024}. 
This encoding allows to reproduce in the GSM context the decidability and complexity results recently established for DCDSs with bounded information models (\emph{state-bounded DCDSs}). These are DCDSs where the number of tuples does not exceed a given maximum value. This does not mean that the system must contain an overall bounded number of data: along a run, infinitely many data can be encountered and stored into the information model, provided that they do not accumulate in the same state. 
We lift this property in the context of GSM, and show that verification of state-bounded GSM models is decidable for a powerful temporal logic, namely a variant of first-order $\mu$-calculus supporting a restricted form of quantification~\cite{Emerson96}. 
%
We then isolate an interesting class of GSM models for which state-boundedness is guaranteed, and introduce guidelines that can be employed to turn any GSM model into a state-bounded, verifiable model.

The rest of the paper is organized as follows. Section 2 gives an overview of GSM and provides a first undecidability result. Section 3 introduces DCDSs and presents the GSM-DCDS translation. Section 4 introduces ``state-bounded'' GSM models and provides key decidability results. Discussion and conclusion follow.

\section{GSM modeling of Artifact-Centric Systems}
The foundational character of artifact-centric business processes is the combination of static properties, i.e., the data of interest, and
dynamic properties of a business process, i.e., how it
evolves. \emph{Artifacts}, the key business entities of a given domain, are characterized by \myi an
\emph{information model} that captures business-relevant data, and
\myii a \emph{lifecycle model} that specifies how the artifact
progresses through the business.
In this work, we focus on the Guard-Stage-Milestone (GSM) approach for
artifact-centric modeling, recently proposed by IBM \cite{Damaggio:2011:EIF:2040283.2040315}.
\label{subsec-overview-gsm}
%
%
GSM is a declarative modeling framework that has been designed with the goal of being executable and at the same time enough high-level to result intuitive to executive-level stakeholders. 
The GSM information model uses (possibly nested) attribute/value pairs to capture the
domain of interest.
The key elements of a  lifecycle model are \emph{stages},
\emph{milestones} and \emph{guards}. 
Stages are (hierarchical) clusters of
activities (\emph{tasks}), intended to update and extend the
data of the information model. They are associated to milestones,
business operational objectives to be achieved when the stage is under execution. Guards control the activation of stages and, like milestones, are described in terms of data-aware expressions, called
\emph{sentries}, involving events and conditions over the artifact
information model. Sentries have the form $\textbf{on } e \textbf{ if }
cond$, where $e$ is an event and $cond$ is an (OCL-based) condition
over data. Both parts are optional, supporting pure event-based or
condition-based sentries.
Tasks represent the atomic units of work. Basic tasks are used to
update the information model of some artifact instance (e.g., by using
the data payload associated to an incoming event). Other tasks are
used to add/remove a nested tuple. A specific
\emph{create-artifact-instance} task is instead used to create a new
instance of a given artifact type; this is done by means of a two-way
service call, where the result is used to create a new tuple for the
artifact instance, assign a new identifier to it, and fill it with the
result's payload. Obviously, another task exists to remove a given
artifact instance.
In the following, we use \emph{model} for the intensional level of a specific
business process described in GSM, and \emph{instance} to denote a
GSM model with specific data for its information model.

The execution of a business process may involve several
\emph{instances} of artifact types described by a GSM model. At any
instant, the state of an artifact instance (\emph{snapshot}) is stored
in its information model, and is fully characterised by: \myi values of attributes in the data model, 
\myii status of its stages (open or closed) and
\myiii status of its milestones (achieved or invalidated).
%
Artifact instances may interact with the external world by exchanging
typed \emph{events}. In fact, \emph{tasks} are considered to be
performed by an external agent, and their corresponding execution is
captured with two event types: a \emph{service call},
whose instances are populated by the data from information model and then sent to the environment; 
and a \emph{service call return}, whose instances represent the
corresponding answer from the environment and are used to incorporate
the obtained result back into the artifact information model.
The environment can also send unsolicited (one-way) events, to trigger specific guards or milestones. 
Additionally, any change of a status attribute, such as opening a stage or  achieving a milestone, triggers an internal event, which can be further used to govern the artifact lifecycle.

\begin{example}
\label{ex:gsm}
\small
\begin{figure}[t]
\centering
\includegraphics[width=.8\textwidth]{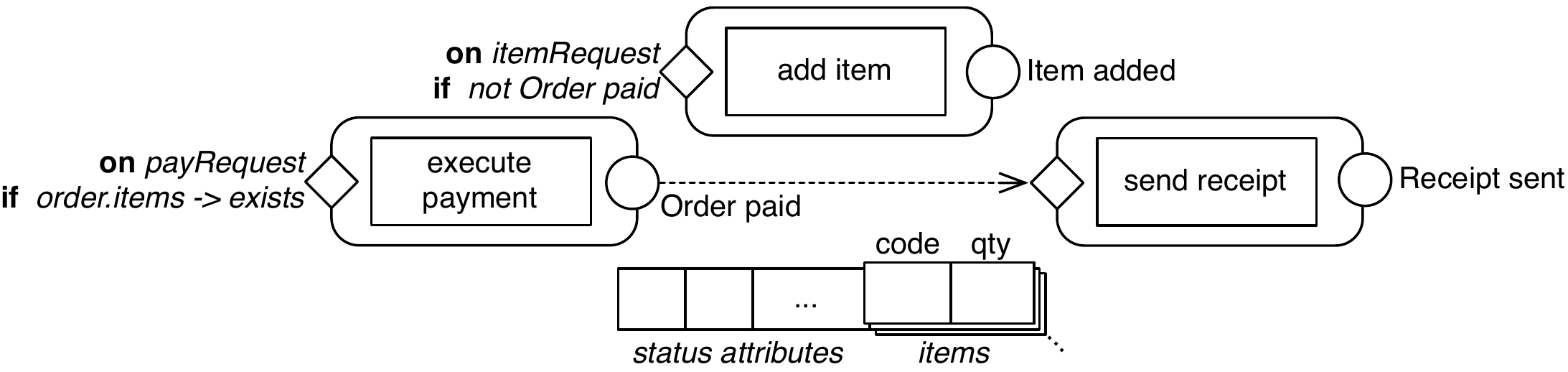}
\caption{GSM model of a simple order management process\label{fig:order-gsm}}
\end{figure}
Figure \ref{fig:order-gsm} shows a simple order management process
modeled in GSM. The process centers around an \emph{order} artifact, whose
information model is
characterized by a set of status attributes (tracking the status of
stages and milestones), and by an extendible set of ordered
\emph{items}, each constituted by a code and a quantity. The order
lifecycle contains three top-level atomic stages (rounded rectangles), respectively used to
manage the manipulation of the order, its payment, and the delivery of
a payment receipt. The order management stage contains a task
(rectangle) to add items to the order. It opens every time an $itemRequest$ event is received, 
provided that the order has not yet been paid. This is represented using a
logical condition associated to a guard (diamond). The stage closes when the task is executed, by achieving an ``item added''
milestone (circle). 
A payment can be
executed once a $payRequest$ event is issued, provided that the order
contains at least one item (verified by the OCL condition
$order.items \rightarrow exists$). As soon as the order is paid, and the
corresponding milestone achieved, the receipt delivery stage is
opened. This direct dependency is represented using a dashed arrow,
which is a shortcut for the condition $\textbf{on } Order~paid$, representing
the internal event of achieving the ``Order paid'' milestone. 
\end{example}
\subsection{Operational semantics of GSM}
\label{subsec-semantics-gsm}
%
%
GSM is associated to three well-defined, equivalent execution semantics, which
discipline the actual enactment of a GSM model \cite{Damaggio:2011:EIF:2040283.2040315}.
Among these, the \emph{GSM incremental semantics} is based on a form
of Event-Condition-Action (ECA) rules, called
\emph{Prerequisite-Antecedent-Consequent} (\emph{PAC}) rules, and is centered
around the notion of \emph{GSM Business steps} (\emph{B-steps}). An artifact instance remains idle until it receives an incoming event from the environment. It is assumed that such events arrive in a sequence and get processed by artifact instances one at a time. A B-step then describes what happens to an \emph{artifact snapshot} $\Sigma$, when a single incoming event $e$ is incorporated into it, i.e., how it evolves into a new snapshot $\Sigma'$ (see Figure 5 in \cite{Damaggio:2011:EIF:2040283.2040315}).
$\Sigma'$ is constructed by building a sequence of
pre-snapshots $\Sigma_i$, where $\Sigma_1$ results from incorporating $e$ into $\Sigma$ by updating its
attributes, one at a time, according to the event payload (i.e., its
carried data). Each consequent pre-snaphot $\Sigma_i$ is obtained by
applying one of the PAC rules to the previous pre-snapshot
$\Sigma_{i-1}$. Each of such transitions is called a \emph{micro-step}. During
a micro-step some outgoing events directed to the environment may be
generated. When no more PAC rules can be applied, the last
pre-snapshot $\Sigma'$ is returned, and the entire set of generated events is sent to the environment.

Each PAC rule is associated to one or more GSM constructs
(e.g. stage, milestone) and has three components:
\begin{compactitem}
\item\ \textbf{Prerequisite:} this component refers to the initial snapshot $\Sigma$ and determines if a rule is \emph{relevant} to the current B-step processing an incoming event $e$.
\item\ \textbf{Ancedent:} this part refers to the
  current pre-snapshot $\Sigma_i$ and determines whether the rule is
  eligible for execution, or $executable$, at the next micro-step.
\item\ \textbf{Consequent:} this part describes the effect of firing a rule, which can be nondeterministically chosen in order to obtain the next-pre-snapshot $\Sigma_{i+1}$.
\end{compactitem}
Due to nondeterminism in the choice of the next firing rule, different
orderings among the PAC rules can exist, leading to non-intuitive
outcomes. This is avoided in the GSM operational semantics by using an
approach reminiscent of stratification in logic programming. In
particular, the approach \myi exploits implicit dependencies between
the (structure of) PAC rules to fix an ordering on their execution,
and \myii applies the rules according to such ordering
\cite{Damaggio:2011:EIF:2040283.2040315}.
To guarantee B-step executability, avoiding situations in which the
execution indefinitely loops without reaching a stable state, the GSM
incremental semantics implements a so-called \emph{toggle-once}
principle. This guarantees that a sequence of micro-steps, triggered
by an incoming event, is always finite, by ensuring that each status
attribute can change its value at most once during a B-step. This
requirement is implemented by an additional condition in the prerequisite part of each PAC rule, which prevents it from firing twice.
%
%

The evolution of a GSM system composed by several artifacts can be
described by defining the initial state (initial snapshot of all
artifact instances) and the sequence of event instances generated by
the environment, each of which triggers a particular B-step, producing
a sequence of system snapshots. This perspective intuitively leads to
the representation of a GSM model as an infinite-state transition
system, depicting all possible sequences of snapshots supported by
the model. The initial configuration of the information model represents
the initial state of this transition system, and the incremental semantics provides the actual
transition relation. The source of infinity relies in the payload of
incoming events, used to populate the information model of artifacts
with fresh values (taken from an infinite/arbitrary domain). Since
such events are not under the control of the GSM model, the system must be
prepared to process such events in every possible order, and with
every acceptable configuration for the values carried in the payload.
The analogy to transition systems opens the possibility of using a
formal language, e.g., a (first-order variant of) temporal logic, to verify whether the GSM
system satisfies certain desired properties and requirements. For
example, one could test generic correctness properties, such as
checking whether each milestone can be achieved (and each stage will
be opened) in at least one of the
possible systems' execution, or that whenever a stage is opened, it
will be always possible to eventually achieve one of its
milestones. Furthermore, the modeler could also be interested in
verifying domain-specific properties, such as checking whether for the
GSM model in Figure~\ref{fig:order-gsm} it is possible to obtain a receipt before the payment is processed.

\subsection{Undecidability in GSM}
\label{subsec-undecidability-gsm}

\newcommand{\tm}{\ensuremath{\mathcal{M}}\xspace}
\newcommand{\dumsym}{\texttt{\#}}
\newcommand{\ktm}{\ensuremath{\K_\tm}\xspace}
\newcommand{\blank}{\textvisiblespace}

%


\begin{figure}[t!]
 \centering
   \includegraphics[width=\textwidth]{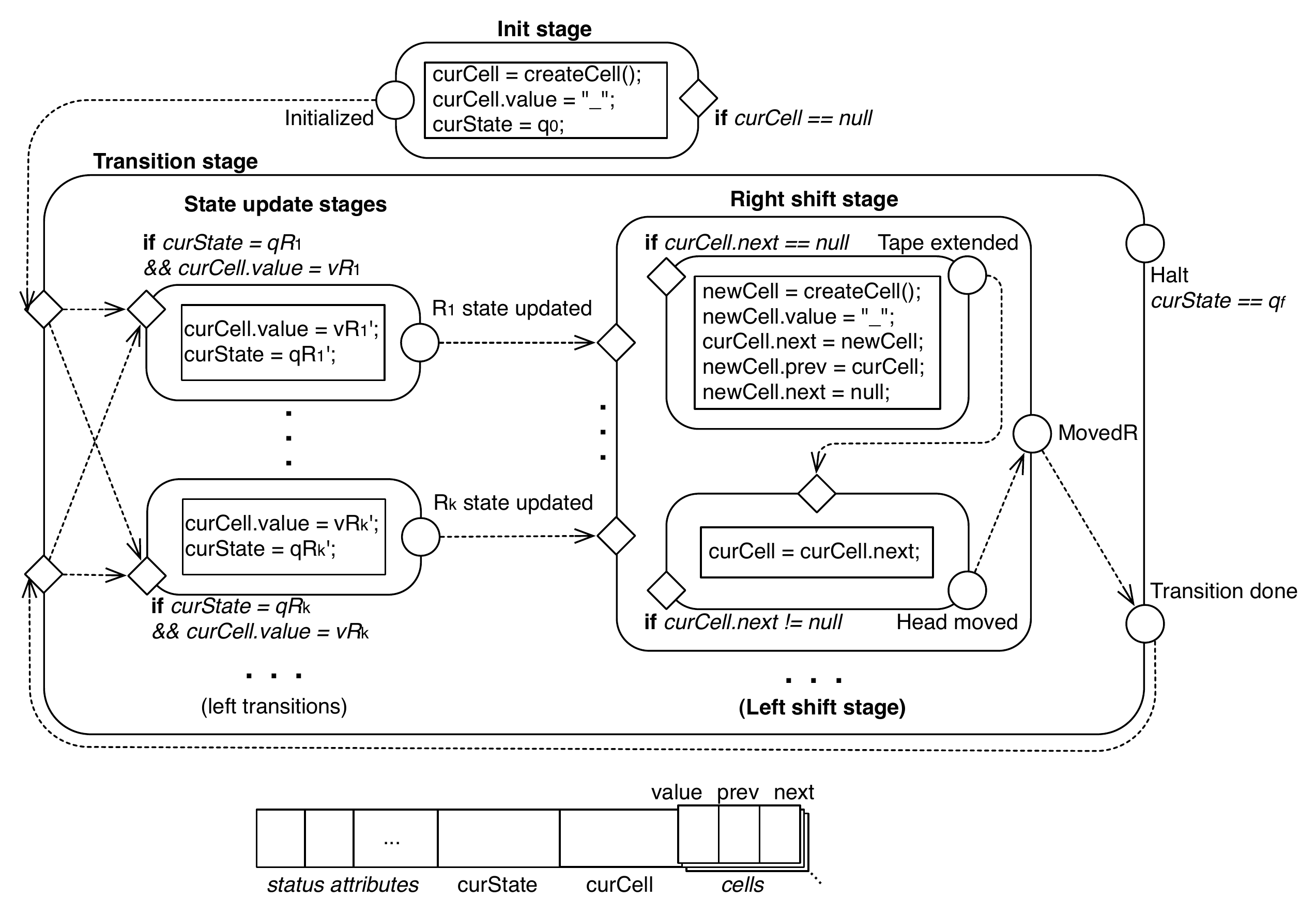}
\caption{ \label{fig:turing-gsm} GSM model of a Turing machine}
 \end{figure}

 In this section, we show that verifying the infinite-state transition
 system representing the execution semantics of a given GSM model is
 an extremely challenging problem, undecidable even for a very simple
 propositional reachability property.
\vspace*{-2pt}
\begin{theorem}
There exists a GSM model for which verification of a propositional
reachability property is undecidable.
\end{theorem}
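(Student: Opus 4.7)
The plan is a reduction from the halting problem: given an arbitrary Turing machine \tm, I would construct a GSM model \ktm whose incremental-semantics transition system simulates the computation of \tm on a fixed (say, empty) input, and such that a designated milestone is eventually achieved if and only if \tm halts. Since halting is undecidable, this immediately yields undecidability of propositional reachability on GSM.

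For the information model of \ktm, I would use a single artifact storing: a status attribute for the current control state of \tm, an attribute for the current head position, and an unbounded nested collection of tape-cell tuples, each carrying a cell index and a symbol from \tm's alphabet (augmented with the blank symbol \blank, or with a sentinel \dumsym). GSM explicitly supports adding and removing such nested tuples through tasks, and the environment can deliver fresh values via event payloads; this lets the tape be extended indefinitely as the head moves, which is the only potential source of unboundedness needed to encode the tape.

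For the lifecycle, I would encode each transition $(q,a)\to(q',b,d)$ of \tm as a stage whose guard sentry has the form ``on \textsf{step} if \textit{state}$=q$ and \textit{symbolAtHead}$=a$'' and whose task overwrites the tape cell at the current head with $b$, updates the state attribute to $q'$, and shifts the head according to $d$ (creating a new blank cell when the head moves beyond the current right or left end). A dedicated stage carrying a milestone \textsf{accept} is guarded by the condition \textit{state}$=q_{halt}$. The environment produces an infinite stream of identical \textsf{step} events; because the GSM incremental semantics processes events one at a time, and PAC-rule stratification together with the toggle-once principle forces each B-step to apply the unique enabled transition rule and then stabilise, the resulting execution deterministically mirrors the unique run of \tm. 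The reachability property to verify is the purely propositional formula ``\textsf{accept} is eventually achieved''. By construction, it holds in the transition system of \ktm iff \tm halts on the chosen input.

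The main obstacle is not the high-level idea but verifying that the simulation is \emph{faithful}: I must ensure that the interplay of micro-steps, PAC-rule stratification, toggle-once, and service-call returns (needed whenever a fresh blank cell must enter the tape) neither admits spurious runs that achieve \textsf{accept} without \tm halting, nor blocks a legitimate computation by locking a status attribute after a single toggle. Concretely, I would split each simulated TM step across a fresh \textsf{step} event, so that the one-toggle-per-B-step restriction never interferes with a multi-step computation, and I would rely on stratification to impose a deterministic ordering on the rules that read the head symbol, write the new symbol, and move the head. With the simulation correct on this level, the undecidability conclusion is immediate from the construction, and this is presumably what Figure~\ref{fig:turing-gsm} depicts.
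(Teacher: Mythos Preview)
Your proposal is correct and follows essentially the same route as the paper: encode a deterministic Turing machine as a GSM model and reduce the halting problem to reachability of a designated ``halt'' milestone. The only noteworthy implementation differences are that the paper represents the tape as a doubly linked list of cell records (with \texttt{prev}/\texttt{next} pointers) rather than as index-addressed tuples---thereby avoiding any need for arithmetic on positions---and drives the successive TM steps via internal milestone events that re-open the \emph{Transition} stage, rather than via an external stream of \textsf{step} events; neither difference affects the argument.
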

\vspace*{-6pt}
\begin{proof}
To show undecidability of verification, we illustrate that a Turing machine can be
easily captured in GSM, and that the halting problem can be stated in
terms of a verification problem. In particular, we consider a
deterministic, single tape Turing machine $\tm =
\tup{Q,\Sigma,q_0,\delta,q_f,\blank}$, where  $Q$ is a finite set of
(internal) states, $\Sigma = \{0,1,\blank\}$ is the tape alphabet
(with $\blank$ the blank symbol), $q_0 \in Q$ and $q_f \in Q$ are the
initial and final state, and $\delta \subseteq Q\setminus \{q_f\}
\times \Sigma \times Q \times \Sigma \times \{L,R\}$ is a transition
relation.   We assume, wlog,  that $\delta$ consists of $k$ right-shift transitions
$R_1,\ldots,R_k$ (those having $R$ as last component), and $n$
left-shift transitions $L_1,\ldots,L_n$ (those having $L$ as last component). The
idea of translation into a GSM model is the following. Beside status
attributes, the GSM information model is constituted by:
\begin{inparaenum}[\it (i)]
\item a $curState$ slot containing the current internal state $q \in
  Q$;
\item a $curCell$ slot pointing to the cell where the head of $\tm$ is
  currently located.
\item a collection of $cells$ representing the current state of the tape.
\end{inparaenum}
Each cell is a complex nested record constituted by a value $v \in
\Sigma$, and two pointers $prev$ and $next$ used to link the cell to
the previous and next cells. In this way, the tape is modeled as a
linked list, which initially contains a single, blank cell, and which
is dynamically extended as needed. To mark the initial (resp., last)
cell of the tape, we assume that its $prev$ ($next$) cell is $null$.

On top of this information model, a GSM lifecyle that mimics $\tm$ is
shown in Figure~\ref{fig:turing-gsm}, where, due to space
constraints, only the right-shift transitions are depicted (the
left-shift ones are symmetric).
 The schema consists of two
top-level stages. \emph{Init} stage is used to initialize the tape.
\emph{Transition} stage is instead used to mimic the execution of one
of the transitions in $\delta$. Each transition is decomposed into two
sub-stages: \emph{state update} and \emph{head shift}. The state
update is modeled by one among $k+n$ atomic sub-stages, each handling
the update that corresponds to one of the transitions in
$\delta$. These stages are mutually exclusive, being $\tm$
deterministic. Consider for example a right-shift transition $R_i =
\delta(qR_i,vR_i,qR_i',vR_i',R)$ (the treatment is similar for a
left-shift transition). The corresponding state update stage is opened
whenever the current state is $qR_i$, and the value contained in the
cell pointed by the head is $vR_i$ (this can be extracted from the
information model using the query $curCell.value$). The incoming
arrows from the two parent's guards ensures that this condition is
evaluated as soon as the parent stage is opened; hence, if the
condition is true, the state update stage is immediately
executed. When the state update stage is closed, the achievement of
the corresponding milestone triggers one of the guards of the Right
shift stage that handles the head shift.
%
%
It contains two sub-stages: the first one extends the tape if the head
is currently pointing to the last cell, while the second one just
perform the shifting.
Whenever a right or left shift stage achieves the corresponding
milestone, then also the parent, transition stage is closed, achieving
milestone ``Transition done''. This has the effect of re-opening the
transition stage again, so as to evaluate the next transition to be
executed. An alternative way of immediately closing the transition
stage occurs when the current state corresponds to the final state
$q_f$. In this case, milestone ``Halt'' is achieved, and the execution
terminates (no further guards are triggered).

By considering this construction, the halting problem for
$\tm$ can be rephrased as the following verification problem: given
the GSM model encoding $\tm$, and starting from an initial state where the information model is empty, is it
possible to reach a state where the ``Halt'' milestone is achieved?
Notice that, since $\tm$ is deterministic, the B-steps of the
corresponding GSM model constitute a linear computation, which could
eventually reach the ``Halt'' milestone or continue
indefinitely. Therefore, reaching a state where ``Halt'' is achieved can
be equivalently formulated using propositional CTL or LTL. \qed
\end{proof}



\section{Translation into Data-Centric Dynamic Systems}
\label{sec-translation}
%
We discuss a translation procedure that faithfully rewrites a GSM model into a corresponding formal representation in terms of a Data-Centric Dynamic System (DCDS), for which interesting decidability results have been recently obtained.

DCDSs are a formal framework for the specification of data-aware business processes, i.e., systems where the connection between the process perspective and the manipulated data is explicitly tackled \cite{conf/dlog/HaririCGM11}. 
Technically, a DCDS is a pair $\mathcal{S} = \langle \mathcal{D}, \mathcal{P} \rangle$, where $\mathcal{D}$ is a data layer and $\mathcal{P}$ is a process layer over $\mathcal{D}$. $\mathcal{D}$ maintains all the relevant data in the form of a relational database with integrity constraints. In the artifact-centric context, the database is constituted by the union of all artifacts information models. The process layer $\mathcal{P}$ changes and evolves the data maintained by $\mathcal{D}$. It is constituted by a tuple 
$\mathcal{P} = \langle \mathcal{F}, \mathcal{A}, \varrho \rangle$. 
$\mathcal{F}$ is a finite set of functions representing interfaces to external services, used to import new, fresh data into the system.
$\mathcal{A}$ is a set of actions of the form $\alpha(p_1, ..., p_n) : \{e_1, ..., e_m\}$,  where $\alpha$ is the action name, $p_1, ..., p_n$ are input parameters, and $e_i$ are effect specifications. Each effect specification defines how a portion of the next database instance is constructed starting from the current one. Technically, its form is $Q \rightsquigarrow E$, where:
\begin{inparaenum}[\it (i)]
\item $Q$ is a query over $\mathcal{D}$ that could involve action parameters, and is meant to extract tuples from the current database;
\item $E$ is a set of effects, specified in terms of facts over $\mathcal{D}$ that will be asserted in the next state; these facts can contain variables of $Q$ (which are then replaced with actual values extracted from the current database), and also service calls, which are resolved by calling the service with actual input parameters and substituting them with the obtained result.\footnote{In \cite{conf/dlog/HaririCGM11}, two semantics for services are introduced: deterministic and nondeterministic. Here we always assume nondeterministic services, which is in line with GSM.}
\end{inparaenum}
Finally, $\varrho$ is a declarative process specified in terms of Condition-Action (CA) rules that determine, at any moment, which actions are executable.
Technically, each CA rule has the form $Q \mapsto \alpha$, where $Q$ is a query over $\mathcal{D}$, and $\alpha$ is an action. Whenever $Q$ has a positive answer over the current database, then $\alpha$ becomes executable, with actual values for its parameters given by the answer to $Q$. 

The execution semantics of a DCDS $\sys$ is defined by a possibly infinite-state transition system $\Upsilon_\sys$, where states are instances of the database schema in $\mathcal{D}$ and each transition corresponds to the application of an executable action in $\mathcal{P}$. Similarly to GSM, where the source of infinity comes from the fact that incoming events carry an arbitrary payload, in DCDSs the source of infinity relies in the service calls, which can inject arbitrary fresh values into the system.

We recall some key (un)decidability and complexity results related to DCDSs, which will be then used to study the formal verification of GSM.

\begin{theorem}[\cite{conf/dlog/HaririCGM11}]
There exists a DCDS for which verification of a propositional safety property expressible in LTL $\cap$ CTL  is undecidable.
\end{theorem}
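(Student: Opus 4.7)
The plan is to establish undecidability by encoding a Turing machine \tm\ into a DCDS $\sys = \tup{\D,\P}$ whose transition system $\Upsilon_\sys$ faithfully simulates the run of \tm, and then expressing the halting problem as a propositional reachability (equivalently, safety violation) property. Since the analogous construction has just been carried out for GSM in Theorem 1, the main task is to replay that encoding inside the DCDS syntax rather than to invent a new reduction from scratch.

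First, I would design the data layer \D. The schema comprises a unary relation $\textit{State}$ holding the current internal state of \tm, a unary relation $\textit{Head}$ storing the identifier of the cell currently scanned, and a ternary relation $\textit{Cell}(id,\mathit{val},\mathit{next})$ encoding the tape as a linked list, plus a binary $\textit{Prev}$ (or, alternatively, exploit $\textit{Cell}$ inversely). Integrity constraints would enforce functionality of $\textit{Head}$ and $\textit{State}$, uniqueness of cell identifiers, and acyclicity of the list so that only well-formed tape configurations are admitted. The initial database contains one blank cell with $\textit{Head}$ pointing to it and $\textit{State}$ set to $q_0$.

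Second, I would populate the process layer \P. For every transition $\delta(q,v)=(q',v',D)$ of \tm\ I would introduce one action $\alpha_{q,v,D}$ whose effect specifications (of the form $Q \rsa E$) read the triple consisting of current state, scanned cell and its value, delete the obsolete facts, and reassert updated $\textit{State}$, $\textit{Cell}$ and $\textit{Head}$ facts. When the head is about to leave the existing portion of the tape, the effect appends a fresh cell by inserting $\textit{Cell}(f(\cdot),\blank,\mathit{null})$, where $f \in \F$ is a service call returning a fresh identifier; this is precisely the role of \F\ in injecting new values. Each action is guarded by a CA rule $Q_{q,v} \carule \alpha_{q,v,D}$ whose query $Q_{q,v}$ checks that the current state is $q$ and the scanned cell holds $v$. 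Determinism of \tm\ ensures that at most one CA rule fires at a time, so $\Upsilon_\sys$ is a linear (up to the choice of fresh identifiers) simulation of \tm.

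Finally, I would state the verification instance: let $\varphi \equiv \mathbf{G}\,\neg \textit{State}(q_f)$, which is a purely propositional safety property lying in $\mathrm{LTL}\cap\mathrm{CTL}$ (equivalently $\mathbf{AG}\,\neg\textit{State}(q_f)$). Then $\Upsilon_\sys \models \varphi$ iff \tm\ does not halt on the empty input, so decidability of verification would decide the halting problem, a contradiction. The main obstacle I expect is purely syntactic: showing that the bookkeeping of linked-list pointers, together with the semantics of service calls as value-injecting oracles, actually yields a genuine step-by-step bisimulation between the \tm\ computation and $\Upsilon_\sys$, so that every state of $\Upsilon_\sys$ encodes a unique \tm\ configuration and vice versa. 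Once that correspondence is established, the propositional nature of $\varphi$ makes the reduction immediate.
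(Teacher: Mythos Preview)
The paper does not supply its own proof of this theorem: the result is simply recalled from \cite{conf/dlog/HaririCGM11}, followed only by the one-line remark that it ``comes from the high expressiveness of DCDSs'' and that DCDSs can encode GSM. There is therefore no argument in the paper to match your construction against step by step.

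Your proposal is nonetheless correct and is exactly the kind of direct Turing-machine encoding one expects the cited source to contain. Two small points are worth tightening. First, acyclicity of the linked list is not first-order and hence cannot sit among the integrity constraints of $\D$; fortunately you do not need it, since your actions only ever append a cell at the open end of the list, so acyclicity is an invariant of the reachable states rather than something the data layer must enforce. Second, under the nondeterministic service semantics assumed in the paper, the call $f(\cdot)$ may return an identifier already in use; freshness is then obtained indirectly, via the key constraint on cell identifiers that discards the offending transition---which you already list. With these refinements the correspondence between \tm-configurations and states of $\Upsilon_\sys$ (up to renaming of fresh identifiers) goes through, and the propositional safety formula $\mathbf{AG}\,\neg\textit{State}(q_f)$, equivalently the LTL formula $\mathbf{G}\,\neg\textit{State}(q_f)$, reduces non-halting as you claim.

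If one insists on extracting a proof from the present paper rather than from the cited source, the route the authors hint at is different from yours: take the GSM undecidability theorem proved just before (via essentially the same Turing-machine idea) and compose it with the faithful GSM-to-DCDS translation established afterwards. That indirection buys nothing over your direct encoding; it merely reuses machinery the paper builds anyway, whereas your argument is self-contained and independent of the correctness of the full translation.
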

This result comes from the high expressiveness of DCDSs. In fact, we will see that DCDSs can encode GSM. However, alongside this undecidability result, \cite{conf/dlog/HaririCGM11} identifies an interesting class of \emph{state-bounded} DCDSs, for which decidability of verification holds for a sophisticated (first-order) temporal logic called $\mulpers$. Intuitively, state boundedness requires the existence of an overall bound that limits, at every point in time, the size of the database instance of $\sys$ (without posing any restriction on which values can appear in the database). Equivalently, the size of each state contained in $\Upsilon_\sys$ cannot exceed the pre-established bound. Hence, in the following we will indifferently talk about state-bounded DCDSs or state-bounded transition systems.
\begin{theorem}[\cite{conf/dlog/HaririCGM11}]
\label{thm:decidability}
Verification of \mulpers properties over state-bounded DCDS is decidable, and can be reduced to finite-state model checking of propositional $\mu$-calculus.
\end{theorem}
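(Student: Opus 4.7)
The plan is to reduce the infinite-state transition system $\Upsilon_\sys$ associated with a state-bounded DCDS $\sys$ to a finite-state one $\Theta_\sys$ that is indistinguishable from the original with respect to any $\mulpers$ formula, and then apply standard propositional $\mu$-calculus model-checking on $\Theta_\sys$. The source of infinity in $\Upsilon_\sys$ is the injection of fresh values coming from service calls: state-boundedness caps the size of every individual state, but it does not cap the number of distinct values encountered along a run, so the state space is a priori infinite-branching. The key insight is that $\mulpers$ restricts quantification to values that \emph{persist} across states, hence values that leave the active domain can be safely ``forgotten'' by the abstraction.

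First, I would introduce a \emph{persistence-preserving bisimulation} $\pbsim$ between DCDS transition systems, that relates two states $s_1, s_2$ whenever there is a partial bijection $h$ between their active domains such that $h$ makes the databases isomorphic, and which lifts to successor states by preserving only those value correspondences that survive the transition (i.e.\ remain in the successor's active domain). Standard arguments, essentially the same as for classical $\mu$-calculus bisimulation invariance lifted to the first-order setting, show that if $s_1 \pbsim s_2$ then $s_1$ and $s_2$ satisfy exactly the same $\mulpers$ formulae, where the persistent assignment of free variables is transported through $h$.

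Next, I would construct $\Theta_\sys$ by means of \emph{value recycling}: fix a finite pool of constants of size proportional to the state bound (say, twice the bound suffices to cover both the current state and the values that could be simultaneously used to represent service-call results). Whenever a transition in $\Upsilon_\sys$ requires a fresh value, the abstract transition picks one from the pool that is currently not in use in the source state. Because each state has bounded size, such a free constant always exists; and because quantification in $\mulpers$ only follows values along paths where they persist, reusing a recycled constant is observationally indistinguishable from picking a genuinely fresh one. The resulting $\Theta_\sys$ has only finitely many states up to renaming, and by construction $\Upsilon_\sys \pbsim \Theta_\sys$.

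The main technical obstacle is the proof that persistence-preserving bisimulation actually preserves $\mulpers$, which requires a careful induction on formula structure that threads the partial bijection $h$ through fixpoint unfoldings and through the persistence-restricted quantifier: one must verify that whenever a witness value is quantified over in one model, a corresponding persistent witness is available in the bisimilar one, \emph{and} that this correspondence is maintained along every subsequent transition examined by the modalities. Once this invariance theorem is in place, Theorem~\ref{thm:decidability} follows: verification of $\varphi \in \mulpers$ on $\Upsilon_\sys$ reduces to checking $\varphi$ on the finite $\Theta_\sys$, and on a finite state space the first-order constructs of $\mulpers$ can be eliminated by expanding quantifiers into finite disjunctions/conjunctions indexed by the constants in the pool, yielding an equivalent propositional $\mu$-calculus formula to be checked by any standard algorithm.
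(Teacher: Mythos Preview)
The paper does not actually prove this theorem: it is quoted verbatim from~\cite{conf/dlog/HaririCGM11} (see also~\cite{DBLP:journals/corr/abs-1203-0024}) as a background result about DCDSs, and the paper only \emph{uses} it to transfer decidability to GSM via the translation of Section~\ref{sec-translation}. So there is no ``paper's own proof'' to compare against here.

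That said, your sketch is faithful to how the cited works establish the result: a persistence-preserving bisimulation is shown to preserve $\mulpers$, a finite abstract transition system is built by recycling constants from a bounded pool (whose size is determined by the state bound plus the maximum number of fresh values any single action can introduce), and on the finite abstraction first-order quantifiers are grounded into propositional $\mu$-calculus. The one point you might sharpen is the treatment of fixpoints in the invariance proof: the induction must be carried out on approximants, and because the bijection $h$ can change along a run (only its restriction to persisting values is preserved), the valuation of fixpoint variables must itself be parameterised by bijections---this is where the argument is genuinely delicate, and where the ``persistence'' restriction on quantification is essential rather than merely convenient.
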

$\mulpers$ is a first-order variant of $\mu$-calculus, a rich branching-time temporal logic that subsumes all well-known temporal logics such as PDL, CTL, LTL and CTL* \cite{Emerson96}.  $\mulpers$ employs first-order formulae to query data maintained by the DCDS data layer, and supports a controlled form of first-order quantification across states (within and across runs). In particular, $\mulpers$ requires that the values in the scope of quantification continuously persist for the quantification to take effect. As soon as a value is not present in the current database anymore, a formula talking about it collapses to $true$ or $false$. This restriction is in line with the artifact-centric setting, where a given artifact identifier points to the same artifact until such an artifact is live, but as soon as the artifact is destroyed, it can be recycled to identify a completely different artifact (and it would be incorrect to consider it the same as before). 
\vspace*{-2pt}
\begin{example}
\small
$\mulpers$ can express two variants of a correctness requirement for GSM:
\begin{compactitem}
\item it is always true that, whenever an artifact id is present in the information model, the corresponding artifact will be destroyed (i.e., the id will disappear) \emph{or} reach a state where all its stages are closed;
\item it is always true that, whenever an artifact id is present in the information model, the corresponding artifact will persist \emph{until} a state is reached where all its stages are closed.
\end{compactitem}
\end{example}


\subsection{Translating GSM into DCDS}
\label{subsec-gsm-dcds}
\newcommand{\gsm}{\ensuremath{\mathcal{G}}\xspace}
For the sake of space, we only discuss the intuition behind the translation and provide the main results. For a full technical development, we refer the interested reader to a technical report \cite{SMT2012}.

As introduced in Section~\ref{subsec-semantics-gsm}, the execution of a GSM instance is described by a sequence of B-steps. Each B-step consists of an initial micro-step which incorporates incoming event into current snapshot, a sequence of micro-steps executing all applicable PAC-rules,
and finally a micro-step sending a set of generated events at the termination of the B-step.
The translation relies on the incremental semantics: given a GSM model \gsm, we encode each possible micro-step as a separate condition-action rule in the process of a corresponding DCDS system \sys, such that the effect on the data and process layers of the action coincides with the effect of the corresponding micro-step in GSM. However, in order to guarantee that the transition system induced by a resulting DCDS mimics the one of the GSM model, the translation procedure should also ensure that all semantic requirements described in Section~\ref{subsec-semantics-gsm} are modeled properly: \begin{inparaenum}[\it (i)]
\item ``one-message-at-a-time'' and ``toggle-once'' principles,
\item the finiteness of micro-steps within a B-step, and
\item their order imposed by the model.
\end{inparaenum}
We sustain these requirements by introducing into the data layer of \sys a set of auxiliary relations, suitably recalling them in the CA-rules to reconstruct the desired behaviour.

Restricting \sys  to process only one incoming message at a time is implemented by the introduction of a \emph{blocking mechanism}, represented by an auxiliary relation $R_{block}(id_R, blocked)$ for each artifact in the system, where $id_R$ is the artifact instance identifier, and $blocked$ is a boolean flag. This flag is set to $true$ upon receiving an incoming message, and is then reset to $false$ at the termination of the corresponding B-step, once the outgoing events accumulated in the B-step are sent the environment. If an artifact instance has $blocked = true$, no further incoming event will be processed. This is enforced by checking the flag in the condition of each CA-rule associated to the artifact.

In order to ensure ``toggle once'' principle and guarantee the finiteness of sequence of micro-steps triggered by an incoming event, we introduce an \emph{eligibility tracking mechanism}. This mechanism is represented by an auxiliary relation $R_{exec}(id_R, x_1, ..., x_c)$,
 where $c$ is the total number of PAC-rules, and each $x_i$ corresponds to a certain PAC-rule of the GSM model. Each $x_i$ encodes whether the corresponding PAC rule is eligible to fire at a given moment in time (i.e., a particular micro-step). The initial setup of the eligibility tracking flags is performed at the beginning of a B-step, based on the evaluation of the prerequisite condition of each PAC rule. More specifically, when $x_i = 0$, the corresponding CA-rule is eligible to apply and has not yet been considered for application. When instead $x_i = 1$, then either the rule has been fired, or its prerequisite turned out to be false. This flag-based approach is used to propagate in a compact way information related to the PAC rules that have been already processed, following a mechanism that resembles \emph{dead path elimination} in BPEL. In fact, $R_{exec}$ is also used to enforce a firing order of CA-rules that follows the one induced by \gsm. This is achieved as follows. For each CA-rule $Q \mapsto \alpha$ corresponding to a given PAC rule $r$, condition $Q$ is put in conjunction with a further formula, used to check whether all the PAC rules that precede $r$ according to the ordering imposed by \gsm have been already processed. Only in this case $r$ can be considered for application, consequently applying its effect $\alpha$ to the current artifact snapshot. More specifically, the corresponding CA-rule becomes $Q \land exec(r) \mapsto \alpha$, where $exec(r) = \bigwedge_i x_i$ such that $i$ ranges over the indexes of those rules that precede $r$.

Once all $x_i$ flags are switched to $1$, the B-step is about to finish: a dedicated CA-rule is enabled to send the outgoing events to the environment, and the artifact instance $blocked$ flag is released.

 \begin{example}

\begin{figure}[t]
\scriptsize
\begin{align}
&R_{exec}(id_R,\overline{x}) \land x_k = 0 \land exec(k) \land R_{block}(id_R,true) \mapsto
\\
& \quad a_{exec}^k(id_R,\overline{a}',\overline{x}):\{\\
&\qquad R_{att}(id_R,\overline{a},\overline{s},\overline{m}) \land R_{chg}^{S_j}(id_R,true) \rightsquigarrow \{R_{att}(id_R,\overline{a},\overline{s},\overline{m})[m_j/false]\}\\
&\qquad R_{att}(id_R,\overline{a},\overline{s},\overline{m}) \land R_{chg}^{S_j}(id_R,true) \rightsquigarrow \{R_{chg}^{m_j}(id_R,false)\}\\
&\qquad R_{exec}^M(id_R,\overline{x}) \land x_k = 0 \rightsquigarrow \{R_{exec}^M(id_R,\overline{x})[x_k/1]\}\\
&\qquad [\mathsf{CopyMessagePools}],  [\mathsf{CopyRest}]\quad\}
\end{align}
\vspace*{-10pt}
\caption{ \label{fig:example-translation} CA-rule encoding a milestone invalidation upon stage activation}
\vspace*{-12pt}
\end{figure}
\small
An example of a translation of a GSM PAC-rule (indexed by $k$) is presented in Figure~\ref{fig:example-translation}.
For simplicity, multiple parameters are compacted using an ``array'' notation (e.g., $x_1,\ldots,x_n$ is denoted by $\overline{x}$). In particular: (1) represents a condition part of a CA-rule, ensuring the ``toggle-once'' principle ($x_k=0$), the compliant firing order ($exec(k)$) and the ``one-message-at-a-time'' principle ($R_{block}(id_R,true)$); 
(2) describes the action signature;
(3) is an effect encoding the invalidation a milestone once the stage has been activated; (4) propagates an internal event denoting the milestone invalidation, if needed; (5) flags the encoded micro-step corresponding to PAC rule $k$ as processed; (6) transports the unaffected data into the next snapshot.
\end{example}

\begin{figure}[t]
\centering
\includegraphics[width=\textwidth]{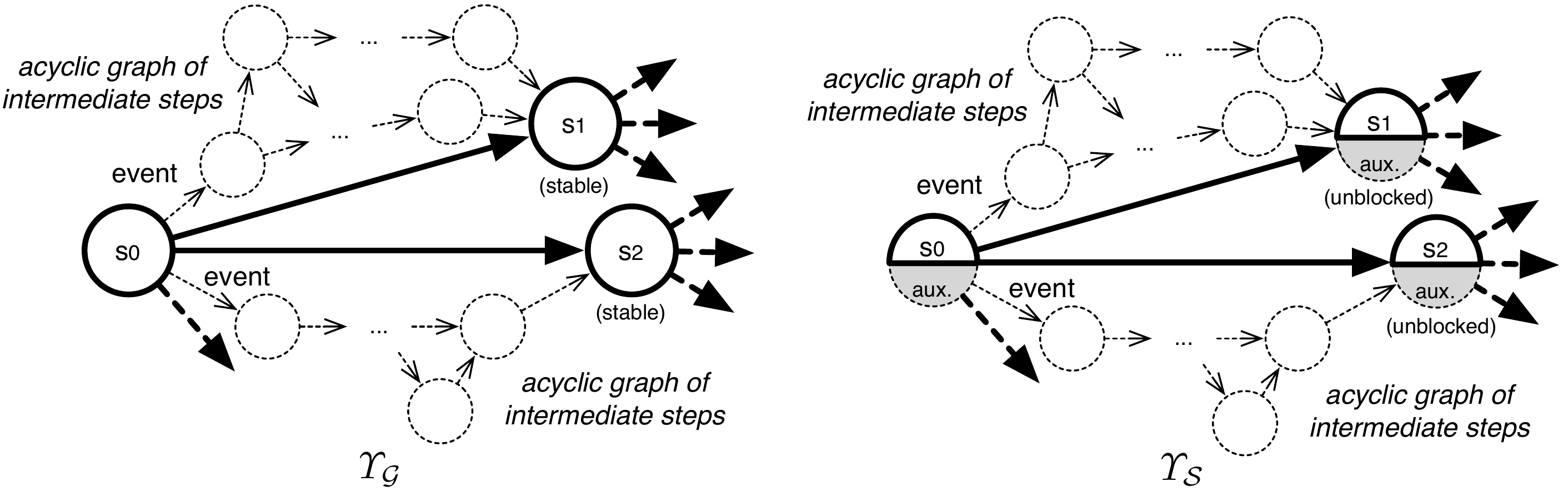}
\caption{Construction of the B-step transition system $\Upsilon_\gsm$ and unblocked-state transition system $\Upsilon_\sys$ for a GSM model \gsm with initial snapshot $s_0$ and the corresponding DCDS \sys}
\label{fig:ts}
\end{figure}

Given a GSM model \gsm with initial snapshot $S_0$, we denote by $\Upsilon_\gsm$ its \emph{B-step transition system}, i.e., the infinite-state transition system obtained by iteratively applying the incremental GSM semantics starting from $S_0$ and nondeterministically considering each possible incoming event. The states of $\Upsilon_\gsm$ correspond to stable snapshots of $\gsm$, and each transition corresponds to a B-step. We abstract away from the single micro-steps constituting a B-step, because they represent temporary intermediate states that are not interesting for verification purposes. Similarly, given the DCDS \sys obtained from the translation of \gsm, we denote by $\Upsilon_\sys$ its \emph{unblocked-state transition system}, obtained by starting from $S_0$, and iteratively applying nondeterministically the CA-rules of the process, and the corresponding actions, in all the possible ways. As for states, we only consider those database instances where all artifact instances are not blocked; these correspond in fact to stable snapshots of  $\gsm$.  We then connect two such states provided that there is a sequence of (intermediate) states that lead from the first to the second one, and for which at least one artifact instance is blocked; these sequence corresponds in fact to a series of intermediate-steps evolving the system from a stable state to another stable state. 
Finally, we project away all the auxiliary relations introduced by the translation mechanism, obtaining a \emph{filtered} version of $\Upsilon_\sys$, which we denote as $\Upsilon_\sys|_\gsm$. 
The intuition about the construction of these two transition systems is given in Figure~\ref{fig:ts}. Notice that the intermediate micro-steps in the two transition systems can be safely abstracted away because: \myi thanks to the toggle-once principle, they do not contain any ``internal'' cycle; \myii respecting the firing order imposed by \gsm, they all lead to reach the same next stable/unblocked state.
We can then establish the one-to-one correspondence between these two transition systems in the following theorem (refer to \cite{SMT2012} for complete proof):


\begin{theorem}
\label{thm:translation}
Given a GSM model \gsm and its translation into a corresponding DCDS \sys, the corresponding B-step transition system $\Upsilon_\gsm$ and filtered unblocked-state transition system $\Upsilon_\sys|_\gsm$ are equivalent, i.e., $\Upsilon_\gsm \equiv \Upsilon_\sys|_\gsm$.
\end{theorem}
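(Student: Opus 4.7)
The plan is to establish the equivalence via a bisimulation-style argument between $\Upsilon_\gsm$ and $\Upsilon_\sys|_\gsm$. First, I would define a correspondence relation $\sim$ between stable GSM snapshots and unblocked DCDS states: a snapshot $\Sigma$ is related to a database instance $I$ whenever the attribute, stage-status, and milestone values in $\Sigma$ coincide with the corresponding $R_{att}$ tuples of $I$, the message pools agree, and the auxiliary relations of $I$ are in their ``idle'' configuration (in particular all $R_{block}$ flags are $false$ and every $R_{exec}$ flag is $1$). By construction of the initial DCDS database from $S_0$, the two initial states are in $\sim$, which gives the base case.

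For the inductive step I would prove that $\sim$ is a bisimulation by establishing two matching lemmas. In the forward direction, given $\Sigma \sim I$ and a B-step $\Sigma \trans \Sigma'$ triggered by incoming event $e$, I would construct a witnessing CA-rule sequence $I \trans I_1 \trans \cdots \trans I_n = I'$ in $\sys$ whose intermediate states are all blocked. The sequence starts with the rule that incorporates $e$ and raises $R_{block}$, then fires exactly the CA-rules encoding the applicable PAC rules in an order compatible with the GSM stratification, and finally fires the closing rule that releases $R_{block}$ and dispatches the accumulated outgoing events. The key invariant is that the conjunct $exec(k)$ in each CA-rule condition, together with the update $x_k := 1$ in the effect, forces precisely the firing order and the toggle-once discipline of the GSM incremental semantics. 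In the backward direction, I would argue that any maximal unblocked-to-unblocked trace in $\Upsilon_\sys$ has exactly this shape: the flag mechanism guarantees termination (each $x_k$ can flip from $0$ to $1$ at most once, so the sequence is finite), and any admissible interleaving projects onto a legal sequence of GSM micro-steps producing a snapshot in the same $\sim$-class as $\Sigma'$.

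The main obstacle I expect is the careful bookkeeping required to show that the projection $|_\gsm$ faithfully strips away all auxiliary relations without losing information: one must check that two unblocked DCDS states projecting to the same GSM-visible content are truly interchangeable, which depends on the fact that at an idle configuration the auxiliary relations are a deterministic function of the GSM snapshot. A more delicate point concerns nondeterminism: both the consequent choices inside PAC rules and the service calls used to realize event payloads must be matched branch-by-branch, so the bisimulation has to be proved at the level of individual transitions rather than of aggregate relations; here I would rely on the one-to-one correspondence between GSM consequent choices and the alternative effect specifications of the translated actions, and on the fact that GSM event payloads and DCDS service calls both range over the same fresh-value domain.

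Finally, I would lift the step-level bisimulation to the full (possibly infinite) transition systems by the standard coinductive argument, concluding $\Upsilon_\gsm \equiv \Upsilon_\sys|_\gsm$. For the technical details of the PAC-to-CA encoding and of the micro-step correspondence I would refer to the technical report \cite{SMT2012}, as suggested in the paper.
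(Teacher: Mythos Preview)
Your proposal is essentially correct and aligns with the paper's approach. Note that the paper itself does not give a proof in the body: it only sketches the construction of the two transition systems, highlights the two facts that make the abstraction of micro-steps sound (toggle-once rules out internal cycles; the enforced firing order makes the next stable/unblocked state deterministic given the event and the nondeterministic choices), and then defers the full argument to the technical report~\cite{SMT2012}. Your bisimulation-style development is precisely the natural formalization of that sketch, and the invariants you identify (the $exec(k)$ conjunct enforcing order, the $x_k$ flip enforcing toggle-once, the auxiliary relations being uniquely determined at idle states so that the projection $|_\gsm$ loses nothing) are exactly the ingredients the paper's informal discussion relies on.
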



\section{State-bounded GSM models}
\label{sec-bounded-gsm}


We now take advantage of the key decidability result given in Theorem~\ref{thm:decidability}, and study verifiability of \emph{state-bounded GSM models}.
Observe that state-boundedness is not a too restrictive condition. It requires each state of the transition system to contain a bounded number of tuples. However, this does not mean that the system in general is restricted to encounter only a limited amount of data: infinitely many values may be distributed \emph{across} the states (i.e. along an execution), provided that they do not accumulate in the same state. Furthermore, infinitely many executions are supported, 
reflecting that whenever an external event updates a slot of the information system maintained by a GSM artifact, infinitely many successor states in principle exist, each one corresponding to a specific new value for that slot.
To exploit this, we have first to show that the GSM-DCDS translation preserves state-boundedness, which is in fact the case.


\begin{lemma}
\label{lemma_statebounded}
\emph{Given a GSM model \gsm and its DCDS translation \sys, \gsm is state-bounded if and only if \sys is state-bounded.} 
\end{lemma}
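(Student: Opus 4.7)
The plan is to show both directions by relating the size of a DCDS database instance to the size of a GSM snapshot via a multiplicative constant that depends only on the static structure of $\gsm$. Recall that the translation augments the information model with three families of auxiliary relations: blocking relations $R_{block}$, eligibility-tracking relations $R_{exec}$, and change/event-tracking relations such as $R_{chg}^{S_j}$ and $R_{chg}^{m_j}$, together with the incoming/outgoing message pools. Each of these contains at most a constant number of tuples per live artifact instance, and the arity of each tuple is fixed statically by the number of PAC rules, stages, and milestones appearing in $\gsm$.

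For the forward direction, assume $\gsm$ is state-bounded by some constant $b$. Then every stable snapshot has at most $b$ tuples, and in particular the number of live artifact identifiers is bounded by $b$. Because each auxiliary relation contributes a number of tuples that is linear in the number of live artifacts, the total size of an \emph{unblocked} state of $\Upsilon_\sys$ is bounded by $c\cdot b$ for a constant $c$ that depends only on $\gsm$. For the intermediate (blocked) states produced inside a B-step, the toggle-once principle guarantees that at most one micro-step per PAC rule is executed within a single B-step; hence only a constant number (in $\gsm$) of effects can fire, each inserting finitely many tuples of bounded arity. Intermediate states are therefore bounded by a possibly larger constant multiple of $b$, yielding state-boundedness of $\sys$.

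For the backward direction, assume $\sys$ is state-bounded by some constant $b'$. By Theorem~\ref{thm:translation}, every stable snapshot of $\gsm$ coincides with the projection of a corresponding unblocked state of $\Upsilon_\sys$ onto the GSM-relevant relations, that is, with an unblocked DCDS state with the auxiliary relations filtered away. Since projection cannot increase the number of tuples, every snapshot of $\gsm$ contains at most $b'$ tuples, so $\gsm$ is state-bounded as well.

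The main obstacle lies in the forward direction and concerns the intermediate, blocked states of $\Upsilon_\sys$: these have no counterpart in $\Upsilon_\gsm$, so Theorem~\ref{thm:translation} does not bound them directly. The crucial leverage is the toggle-once principle, encoded in the eligibility-tracking relation $R_{exec}$, which caps the length of any micro-step sequence triggered by an incoming event by the static number of PAC rules in $\gsm$. Thus the transient data accumulated during a B-step depends only on $\gsm$ and not on the length of the run, so a uniform bound across all intermediate states can indeed be derived from the bound on stable snapshots.
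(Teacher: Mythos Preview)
Your proof is correct and follows essentially the same approach as the paper: the backward direction uses Theorem~\ref{thm:translation} together with the observation that filtering away auxiliary relations cannot increase size, and the forward direction bounds each auxiliary relation by quantities that are linear in the number of live artifact instances and depend otherwise only on the static structure of $\gsm$ (number of PAC rules, stages, milestones, message types). Your explicit separation of unblocked versus intermediate (blocked) states, with the toggle-once principle invoked to cap the transient accumulation during a B-step, is slightly more detailed than the paper's treatment, which bounds each auxiliary relation directly at every state, but the underlying argument is the same.
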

\begin{proof}
Recall that \sys contains some auxiliary relations, used to restrict the applicability of CA-rules in order to enforce the execution assumptions of GSM: 
\begin{inparaenum}[\it (i)]
\item the eligibility tracking table $R_{exec}$,
\item the artifact instance blocking flags $R_{block}$,
\item the internal message pools $R^{msg_k}_{data}$, $R^{srv_p}_{data}$, $R^{msg_q}_{out}$, and
\item the tables of status changes $R^{m_i}_{chg}$, $R^{s_j}_{chg}$.
\end{inparaenum}
($\Leftarrow$) This is directly obtained by observing that, if $\Upsilon_\sys$ is state-bounded, then also $\Upsilon_\sys|_\gsm$ is state-bounded. From Theorem~\ref{thm:translation}, we know that $\Upsilon_\sys|_\gsm \equiv \Upsilon_\gsm$, and therefore $\Upsilon_\gsm$ is state-bounded as well.
\\
($\Rightarrow$) We have to show that state boundedness of \gsm implies that also all auxiliary relations present in $\Upsilon_\sys$ are bounded. We discuss each auxiliary relation separately. 
The artifact blocking relation $R_{block}$ keeps a boolean flag for each artifact instance, so its cardinality depends on the number of instances in the model. Since the model is state-bounded, the number of artifact instances is bounded and so is $R_{block}$.
The eligibility tracking table $R_{exec}$ stores for each artifact instance a boolean vector describing the applicability of a certain PAC rule. Since the number of instances is bounded and so is the set of PAC rules, then the relation $R_{exec}$ is also bounded. Similarly, one can show the boundedness of $R^{m_i}_{chg}$, $R^{s_j}_{chg}$ due to the fact that the number of stages and milestones is fixed a-priori. 
Let us now analyze internal message pools. By construction, \sys may contain at most one tuple in $R^{msg_k}_{data}$ and $R^{srv_p}_{data}$ for each artifact instance. This is enforced by the blocking mechanism $R_{block}$, which blocks the artifact instance at the beginning of a B-step  and prevents the instance from injecting further events in internal pools. The outgoing message pool $R^{msg_q}_{out}$ may contain as much tuples per artifact instance as the amount of atomic stages in the model, which is still bounded. However, neither incoming nor outgoing messages are accumulated in the internal pool along the B-steps execution, since the final micro-step of the B-step is designed not to propagate any of the internal message pools to the next snapshot. 
Therefore, $\Upsilon_\sys$ is state-bounded.


\qed
\end{proof}
From the combination of Theorems~\ref{thm:decidability} and~\ref{thm:translation} and Lemma~\ref{lemma_statebounded}, we directly obtain:
\begin{theorem}
\label{thm:gsm-dec}
Verification of \mulpers properties over state-bounded GSM models is decidable, and can be reduced to finite-state model checking of propositional $\mu$-calculus.
\end{theorem}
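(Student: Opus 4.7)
The plan is simply to chain the three previously established results: Lemma~\ref{lemma_statebounded}, Theorem~\ref{thm:translation} and Theorem~\ref{thm:decidability}. Starting from a state-bounded GSM model \gsm, I would first apply the translation of Section~\ref{sec-translation} to obtain a DCDS \sys. By Lemma~\ref{lemma_statebounded} the state-boundedness of \gsm transfers to \sys, so Theorem~\ref{thm:decidability} yields that verifying \mulpers properties over \sys is decidable and reduces to finite-state propositional $\mu$-calculus model checking. It then only remains to transport this decidability back to the GSM side via Theorem~\ref{thm:translation}, and this is where I expect the only genuine subtlety to lie.

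The delicate point is that Theorem~\ref{thm:translation} establishes the equivalence $\Upsilon_\gsm \equiv \Upsilon_\sys|_\gsm$ between the B-step transition system and the \emph{filtered, unblocked-state projection} of \sys, whereas Theorem~\ref{thm:decidability} operates on the raw $\Upsilon_\sys$, which contains intermediate micro-step states and the auxiliary relations ($R_{block}$, $R_{exec}$, the internal/outgoing message pools, and the status-change tables) introduced by the translation. So, given a \mulpers property $\Phi$ formulated over the GSM vocabulary, I would rewrite it into a companion \mulpers property $\Phi^\star$ over the DCDS vocabulary that relativises every temporal modality so that it ranges only over configurations in which \emph{no} artifact instance is blocked (i.e.\ the conjunction of $R_{block}(id,\textit{false})$ over all live artifact instances holds), thereby skipping over the unstable intermediate snapshots. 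The auxiliary relations are never mentioned by $\Phi$, so ignoring them in the projection is immediate. This is a standard stuttering-style syntactic manipulation, and it keeps the resulting formula inside \mulpers.

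Chaining the equivalences then gives $\Upsilon_\gsm \models \Phi$ iff $\Upsilon_\sys|_\gsm \models \Phi$ (by Theorem~\ref{thm:translation}) iff $\Upsilon_\sys \models \Phi^\star$ (by construction of $\Phi^\star$), and Theorem~\ref{thm:decidability} combined with Lemma~\ref{lemma_statebounded} finally reduces the last check to finite-state propositional $\mu$-calculus model checking. The one property I would carefully verify in writing up this step is that the stuttering relativisation preserves the \emph{persistence} semantics of \mulpers first-order quantifiers: since the values being quantified over belong to the GSM information model and not to the auxiliary bookkeeping introduced by the translation, their persistence along the unblocked-state fragment of $\Upsilon_\sys$ coincides with their persistence in $\Upsilon_\sys|_\gsm$ and hence in $\Upsilon_\gsm$, so no new quantifier-trivialisation phenomena are introduced by passing through $\Upsilon_\sys$. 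Once this is checked, the theorem follows immediately from the composition of the three cited results.
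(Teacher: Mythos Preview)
Your proposal is correct and follows the same approach as the paper, which simply states that the theorem is obtained ``from the combination of Theorems~\ref{thm:decidability} and~\ref{thm:translation} and Lemma~\ref{lemma_statebounded}'' without further argument. You are in fact more careful than the paper: the bridging step you identify---that Theorem~\ref{thm:decidability} applies to the raw $\Upsilon_\sys$ while Theorem~\ref{thm:translation} only gives equivalence with the filtered unblocked-state projection $\Upsilon_\sys|_\gsm$, requiring a relativisation of the temporal modalities and a check that the persistence semantics of \mulpers quantifiers is preserved---is genuinely needed but is left implicit in the paper's one-line justification.
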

Obviously, in order to guarantee verifiability of a given GSM model, we need to understand whether it is state-bounded or not. However, state-boundedness is a ``semantic'' condition, which is undecidable to check \cite{DBLP:journals/corr/abs-1203-0024}. We mitigate this problem by isolating a class of GSM models that is guaranteed to be state-bounded. We show however that even very simple GSM models (such as Fig.~\ref{fig:order-gsm}), are not state-bounded, and thus we provide some modelling strategies to make any GSM model state-bounded. 


\subsubsection{GSM Models without Artifact Creation.}
\label{sec:no-creation}
We investigate the case of GSM models that do not contain any \emph{create-artifact-instance} tasks. Without loss of generality, we assimilate the creation of nested datatypes (such as those created by the ``add item'' task in Example~\ref{ex:gsm}) to the creation of new artifacts. From the formal point of view, we can in fact consider each nested datatype as a simple artifact with an empty lifecycle, and its own information model including a connection to its parent artifact.
\begin{corollary}
\label{cor-statebound}
Verification of \mulpers properties over GSM models without \emph{create-artifact-instance} tasks is decidable.
\end{corollary}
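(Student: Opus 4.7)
The plan is to reduce the corollary to Theorem~\ref{thm:gsm-dec} by showing that any GSM model $\gsm$ without \emph{create-artifact-instance} tasks is automatically state-bounded; decidability of $\mulpers$ verification then follows by direct invocation of that theorem.

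First, I would make precise the convention stated just before the corollary: each nested datatype is regarded as an artifact with an empty lifecycle and an extra slot linking it to its parent, and the tasks that add or remove a nested tuple are treated as (dis)creations of such auxiliary artifacts. Under this uniform view, the only construct in GSM that can enlarge the population of live artifact instances is a \emph{create-artifact-instance} task. By hypothesis no such task occurs in $\gsm$, and (once nested tuple additions are reinterpreted) neither do its surrogates. Consequently, along every run of $\gsm$ the set of live instances is contained in the set present in the initial snapshot $S_0$, whose cardinality $N_0$ is fixed by the model.

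Next, I would bound the size of an arbitrary snapshot. Each artifact type has a fixed schema consisting of a bounded number of (possibly nested) attribute slots plus the status flags attached to its stages and milestones; call the overall bound per instance $k$. Because the number of artifact types, stages and milestones is part of the (finite) intensional description of $\gsm$, $k$ is a constant of the model. Combining this with the instance bound $N_0$ gives a uniform bound of $N_0 \cdot k$ tuples per snapshot, independent of the particular run considered. The key point to stress here is that the \emph{values} stored in these slots may still be drawn from an infinite domain and may be refreshed arbitrarily by incoming event payloads: what state-boundedness, as introduced before Theorem~\ref{thm:decidability}, actually requires is a bound on the \emph{number of tuples} in each state, not on the values they may take. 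Hence $\gsm$ is state-bounded.

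The only subtlety, which I expect to be the main thing to check carefully, is that no auxiliary bookkeeping silently breaks the bound. This is precisely the content of Lemma~\ref{lemma_statebounded}: state-boundedness of $\gsm$ transfers to its DCDS encoding $\sys$, since the auxiliary relations $R_{block}$, $R_{exec}$, $R^{m_i}_{chg}$, $R^{s_j}_{chg}$ and the internal message pools are all bounded in terms of the number of artifact instances, stages, milestones and PAC rules. Having established state-boundedness of $\gsm$, I can then apply Theorem~\ref{thm:gsm-dec} directly to conclude that verification of $\mulpers$ properties over $\gsm$ is decidable, and in fact reducible to finite-state model checking of propositional $\mu$-calculus. \qed
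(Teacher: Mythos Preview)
Your proposal is correct and follows essentially the same approach as the paper: you argue that, absent \emph{create-artifact-instance} tasks, the number of live instances never exceeds that of the initial snapshot, bound the per-instance contribution to the information model, invoke Lemma~\ref{lemma_statebounded} to handle the auxiliary bookkeeping, and then apply Theorem~\ref{thm:gsm-dec}. The paper's own proof is organized slightly differently---it analyzes each kind of incoming event (one-way messages, service-call returns) and argues that each merely overwrites existing slots rather than adding tuples, concluding that the information-model size stays \emph{constant} (equal to the initial size) rather than just bounded by $N_0\cdot k$---but this is only a difference of presentation, not of substance.
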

\begin{proof}
Let \gsm be a GSM model without \emph{create-artifact-instance} tasks. At each stable snapshot $\Sigma_k$, \gsm can either process an event representing an incoming one-way message, or the termination of a task. We claim that the only source of state-unboundedness can be caused by service calls return related to the termination of  \emph{create-artifact-instance} tasks.
In fact, one-way incoming messages, as well as other service call returns, do not increase the size of the data stored in the GSM information model, because the payload of such messages just substitutes the values of the corresponding data attributes, according to the signature of the message. Similarly, by an inspection of the proof of Lemma~\ref{lemma_statebounded}, we know that across the micro-steps of a B-step, status attributes are modified but their size does not change. Furthermore, a bounded number of outgoing events could be accumulated in the message pools, but this information is then flushed at the end of the B-step, thus bringing the size of the overall information model back to the same size present at the beginning of the B-step.
Therefore, without \emph{create-artifact-instance} tasks, the size of the information model in each stable state is constant, and corresponds to the size of the initial information model. We can then apply Theorem~\ref{thm:gsm-dec} to get the result.
%
\qed
\end{proof}



\subsubsection{Arbitrary GSM Models.}
\label{sec-guidelines}
\begin{figure}[t]
\includegraphics[width=.9\textwidth]{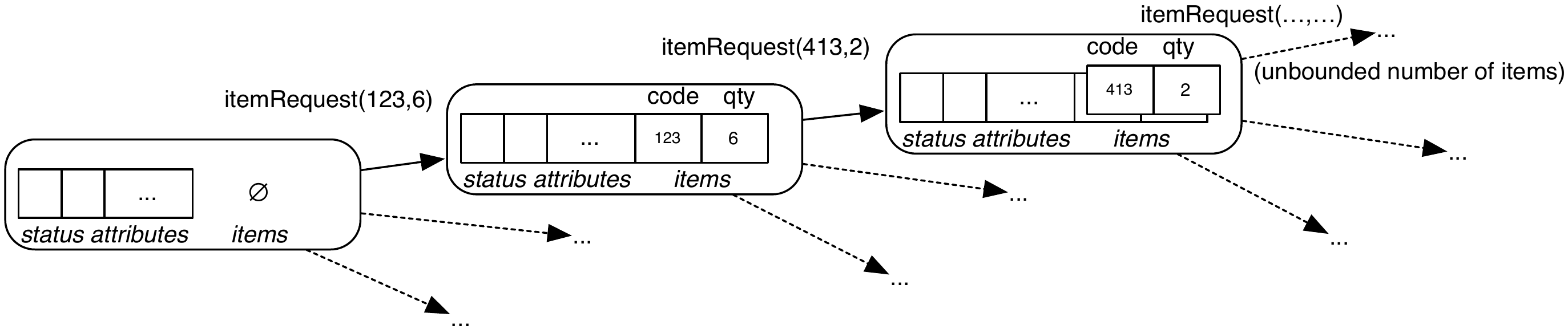}
\caption{\label{fig:unbounded-ts} Unbounded execution of the GSM model in Fig.~\ref{fig:order-gsm}}
\end{figure}
The types of models studied in paragraph above
 are quite restrictive, because they forbid the possibility of extending the number of artifacts during the execution of the system. On the other hand, as soon as this is allowed, even very simple GSM models, as the one shown in Fig.~\ref{fig:order-gsm}, may become state unbounded. In that example, the source of state unboundedness lies in the stage containing the ``add item'' task, which could be triggered an unbounded number of times due to continuous $itemRequest$ incoming events, as pointed out in Fig.~\ref{fig:unbounded-ts}. This, in turn, is caused by the fact that the modeler left the GSM model underspecified, without providing any hint about the maximum number of items that can be included in an order. To overcome this issue, we require the modeler to supply such information (stating, e.g., that each order is associated to at most $10$ items). Technically, the GSM model under study has to be parameterized by an arbitrary but finite number $N_{max}$, which denotes the maximum number of artifact instances that can coexist in the same execution state. We call this kind of GSM model \emph{instance bounded}.
A possible policy to provide such bound is to allocate available ``slots'' for each artifact type of the model, i.e. to specify a maximum number $N_{A_i}$ for each artifact type $A_i$, then having $N_{max} = \sum_i N_{A_i}$.
%
In order to incorporate the artifact bounds into the execution semantics, we proceed as follows. First, we pre-populate the initial snapshot of the considered GSM instance with $N_{max}$ blank artifact instances (respecting the relative proportion given by the local maximum numbers for each artifact type). We refer to one such blank artifact instance as \emph{artifact container}. Along the system execution, each container may be:
\begin{inparaenum}[\it (i)]
\item filled with concrete data carried by an actual artifact instance of the corresponding type, or
\item flushed to the initial, blank state. 
\end{inparaenum}
To this end, each artifact container is equipped with an auxiliary flag $fr_i$, which reflects its current state: $fr_i$ is false when the container stores a concrete artifact instance, true otherwise. 
Then, the internal semantics of \emph{create-artifact-instance} is changed so as to check the availability of a blank artifact container. In particular, when the corresponding service call is to be invoked with the new artifact instance data, the calling artifact instance selects the next available blank artifact container, sets its flag $fr_i $ to $false$, and fills it with the payload of the service call. If all containers are occupied, the calling artifact instance waits until some container is released. 
Symmetrically to artifact creation, the deletion procedure for an artifact instance is managed by turning the corresponding container flag $fr_i$ to true. Details on the DCDS CA-rules formalizing creation/deletion of artifact instances according to these principles can be found in \cite{SMT2012}.

We observe that, following this container-based realization strategy, the information model of an instance-bounded GSM model has a fixed size, which polinomially depends on the total maximum number $N_{max}$. The new implementation of \emph{create-artifact-instance} does not really change the size of the information model, but just suitably changes its content. Therefore, Corollary~\ref{cor-statebound} directly applies to instance-bounded GSM models, guaranteeing decidability of their verification.  Finally, notice that infinitely many different artifact instances can be created and manipulated, provided that they do not accumulate in the same state (exceeding $N_{max}$). 



\section{Discussion and related work}
\label{sec-discussion}
%
In this work we have provided the foundations for the formal verification of the GSM artifact-centric paradigm.
After having proven undecidability of verification in the general case, we have shown decidability of verification for a very rich first-order temporal logic, tailored to the artifact-centric setting, for an interesting class of ``state-bounded'' GSM models.

So far, only few works have investigated verification of GSM models. The closest approach to ours is \cite{BeLP12b}, where state-boundedness is also used as a key property towards decidability. The main difference between the two approaches is that decidability of state-bounded GSM models is proven for temporal logics of incomparable expressive power. In addition to \cite{BeLP12b}, in this work we also study modeling strategies to make an arbitrary GSM model state-bounded, while they assume that the input model is guaranteed to be state-bounded. Hence, our strategies could be instrumental to \cite{BeLP12b} as well.
In \cite{GoGL12}, another promising technique for the formal verification of GSM models is presented. However, the current implementation
cannot be applied to general GSM models, because of assumptions over the data types and the fact that only one instance per artifact type is supported. Furthermore, a propositional branching-time logic is used for verification, restricting to the status attributes of the artifacts. The results presented in our paper can be used to generalize this approach towards more complex models (such as instance-bounded GSM models) and more expressive logics, given, e.g., the fact that ``one-instance artifacts'' fall inside the decidable cases we discussed in this paper.

It is worth noting that all the presented decidability results are actually even stronger: they state that verification can be reduced to standard model checking of propositional $\mu$-calculus over finite-state transition systems (thanks to the abstraction techniques studied in \cite{DBLP:journals/corr/abs-1203-0024}). This opens the possibility of actually implementing the discussed techniques, by relying on state-of-the-art model checkers. We also inherit from \cite{DBLP:journals/corr/abs-1203-0024} the complexity boundaries: they state that verification is \textsc{ExpTime} in the size of the GSM model which, in the case of instance-bounded GSM models, means in turn \textsc{ExpTime} in the maximum number of artifact instances that can coexist in the same state.

Beside implementation-related issues, we also aim to reassess the results presented here in a setting where GSM relies on a rich knowledge base  (a description logic ontology) for its information model, in the spirit of \cite{BCG12}.

\vspace{-.3cm}
\bibliographystyle{splncs03}
\bibliography{main-bib}

\newcommand{\SortNoOp}[1]{}
\begin{thebibliography}{10}
\providecommand{\url}[1]{\texttt{#1}}
\providecommand{\urlprefix}{URL }

\bibitem{VDAS11}
van~der Aalst, W.M.P., Stahl, C.: Modeling Business Processes - A Petri
  Net-Oriented Approach. Springer (2011)

\bibitem{ArP09}
Armando, A., Ponta, S.E.: Model checking of security-sensitive business
  processes. In: Proc.\ of FAST. LNCS, vol. 5983, pp. 66--80. Springer (2009)

\bibitem{conf/dlog/HaririCGM11}
Bagheri~Hariri, B., Calvanese, D., De~Giacomo, G., De~Masellis, R.:
  Verification of conjunctive-query based semantic artifacts. In: Proceedings
  of the 24th International Workshop on Description Logics (DL 2011). CEUR
  Workshop Proceedings, vol. 745. CEUR-WS.org (2011)

\bibitem{BCG12}
Bagheri~Hariri, B., Calvanese, D., Giacomo, G.D., Masellis, R.D., Felli, P.,
  Montali, M.: Verification of description logic knowledge and action bases.
  In: Proc.\ of ECAI. vol. 242, pp. 103--108. IOS Press (2012)

\bibitem{BeLP12}
Belardinelli, F., Lomuscio, A., Patrizi, F.: An abstraction technique for the
  verification of artifact-centric systems. In: Proc. of KR. AAAI Press (2012)

\bibitem{BeLP12b}
Belardinelli, F., Lomuscio, A., Patrizi, F.: Verification of gsm-based
  artifact-centric systems through finite abstraction. In: Proc.\ of ICSOC.
  LNCS, vol. 7636, pp. 17--31. Springer (2012)

\bibitem{Bhatt-2007:artifacts-customer-engagements}
Bhattacharya, K., Caswell, N.S., Kumaran, S., Nigam, A., Wu, F.Y.:
  Artifact-centered operational modeling: {Lessons} from customer engagements.
  IBM Systems Journal  46(4),  703--721 (2007)

\bibitem{Clarke1999:ModelChecking}
Clarke, E.M., Grumberg, O., Peled, D.A.: Model checking. The MIT Press (1999)

\bibitem{CH09}
Cohn, D., Hull, R.: Business artifacts: A data-centric approach to modeling
  business operations and processes. IEEE Data Eng. Bull.  32(3) (2009)

\bibitem{Damaggio:2011:EIF:2040283.2040315}
Damaggio, E., Hull, R., Vaculin, R.: On the equivalence of incremental and
  fixpoint semantics for business artifacts with guard-stage-milestone
  lifecycles. Information Systems  (2012)

\bibitem{Deutsch:2009:AVD:1514894.1514924}
Deutsch, A., Hull, R., Patrizi, F., Vianu, V.: Automatic verification of
  data-centric business processes. In: Proc.\ of ICDT. pp. 252--267. ICDT '09,
  ACM (2009)

\bibitem{Dum2011}
Dumas, M.: On the convergence of data and process engineering. In: Eder, J.,
  Bielikov{\'a}, M., Tjoa, A.M. (eds.) ADBIS. LNCS, vol. 6909, pp. 19--26.
  Springer (2011)

\bibitem{Emerson96}
Emerson, E.A.: Model checking and the mu-calculus. In: Descriptive Complexity
  and Finite Models (1996)

\bibitem{GoGL12}
Gonzalez, P., Griesmayer, A., Lomuscio, A.: Verifying gsm-based business
  artifacts. In: Proc.\ of ICWS. pp. 25--32. IEEE (2012)

\bibitem{DBLP:journals/corr/abs-1203-0024}
Hariri, B.B., Calvanese, D., Giacomo, G.D., Deutsch, A., Montali, M.:
  Verification of relational data-centric dynamic systems with external
  services. CoRR  abs/1203.0024 (2012)

\bibitem{Mor2008}
Morimoto, S.: A survey of formal verification for business process modeling.
  In: Computational Science (ICCS 2008), LNCS, vol. 5102, pp. 514--522.
  Springer (2008)

\bibitem{Nigam03:artifacts}
Nigam, A., Caswell, N.S.: Business artifacts: An approach to operational
  specification. IBM Systems Journal  42(3) (2003)

\bibitem{PW05}
Puhlmann, F., Weske, M.: Using the {\it pi}-calculus for formalizing workflow
  patterns. In: Proceedings of the 3rd International Conference on Business
  Process Management. vol. 3649, pp. 153--168 (2005)

\bibitem{SMT2012}
Solomakhin, D., Montali, M., Tessaris, S.: Formalizing guard-stage-milestone
  meta-models as data-centric dynamic systems. Tech. Rep. KRDB12-4, KRDB
  Research Centre, Faculty of Computer Science, Free University of
  Bozen-Bolzano (2012)

\end{thebibliography}

\end{document}
